\documentclass[11pt]{article}
\usepackage{fullpage,verbatim}

\newcommand{\qed}{\Box}

%
%

\newtheorem{lemma}{Lemma}
\newtheorem{observation}[lemma]{Observation}

 \newcommand{\blackslug}{\hbox{\hskip 1pt \vrule width 4pt height 8pt
depth 1.5pt \hskip 1pt}}
 \newcommand{\QED}{\quad\blackslug\lower 8.5pt\null}
 \newcommand{\smQED}{\quad
  \vrule width1pt height6pt depth-1pt \kern -1pt
  \vrule width4pt height6pt depth-5pt \kern -4pt
  \vrule width4pt height2pt depth-1pt \kern -1pt
  \vrule width1pt height6pt depth-1pt }
 \newenvironment{proof}{{\bf Proof:}}{\QED}

\newcommand{\oldcomment}[1]{}
\newcommand{\junk}[1]{}

\renewcommand{\qed}{\hspace{1em}\hfill\rule{.5em}{.5em}}
\newcommand{\rubberskip}{\par\addvspace{\smallskipamount}}
\renewenvironment{proof}{\rubberskip\noindent{\bf Proof:}\ }{\qed
\rubberskip\noindent}

\newtheorem{freeaux}{}

\newenvironment{free}{\begin{freeaux}\hspace{-.9em}\rm}{\end{freeaux}}
\newcommand{\proofbox}{\rule{.5em}{.5em}}
\renewenvironment{proof}{\begin{free}{\bf Proof:}\ }{\ 
\hfill\proofbox\end{free}}

\newpage

\newcommand{\nestlab}{??}
\newcounter{algctr}\renewcommand{\thealgctr}{\Alph{algctr}}

\newlength{\lwidth}\setlength{\lwidth}{-\labelsep}
\addtolength{\lwidth}{\leftmargini}
\newlength{\lwidthi}\setlength{\lwidthi}{\lwidth}
\newcounter{litemi}
\renewcommand{\thelitemi}{\nestlab.\arabic{litemi}}

\addtolength{\lwidth}{\leftmarginii}
\newlength{\lwidthii}\setlength{\lwidthii}{\lwidth}
\newlength{\itemind}\setlength{\itemind}{-\leftmargini}
\newlength{\itemindii}\setlength{\itemindii}{\itemind}
\newcounter{litemii}
\renewcommand{\thelitemii}{\thelitemi.\arabic{litemii}}

\addtolength{\lwidth}{\leftmarginiii}
\newlength{\lwidthiii}\setlength{\lwidthiii}{\lwidth}
\addtolength{\itemind}{-\leftmarginii}
\newlength{\itemindiii}\setlength{\itemindiii}{\itemind}
\newcounter{litemiii}
\renewcommand{\thelitemiii}{\thelitemii.\arabic{litemiii}}

\addtolength{\lwidth}{\leftmarginiv}
\newlength{\lwidthiv}\setlength{\lwidthiv}{\lwidth}
\addtolength{\itemind}{-\leftmarginiii}
\newlength{\itemindiv}\setlength{\itemindiv}{\itemind}
\newcounter{litemiv}
\renewcommand{\thelitemiv}{\thelitemiii.\arabic{litemiv}}

\addtolength{\lwidth}{\leftmarginv}
\newlength{\lwidthv}\setlength{\lwidthv}{\lwidth}
\addtolength{\itemind}{-\leftmarginiv}
\newlength{\itemindv}\setlength{\itemindv}{\itemind}
\newcounter{litemv}
\renewcommand{\thelitemv}{\thelitemiv.\arabic{litemv}}

\usepackage{amssymb,amsmath}
\sloppy

\newcommand{\drop}[1]{}

\newcommand{\RSHIFT}{\gg}
\newcommand{\LSHIFT}{\ll}

\newcommand{\OR}{\vee}
\newcommand{\AND}{\wedge}
\newcommand{\XOR}{\oplus}
\newcommand{\NOT}{\neg}
\newcommand{\PRED}{\textnormal{predecessor}}
\newcommand{\SUCC}{\textnormal{successor}}
\newcommand{\RANK}{\textnormal{rank}}
\newcommand{\MATCH}{\textnormal{match}}
\newcommand{\MSB}{\textnormal{msb}}
\newcommand{\LSB}{\textnormal{lsb}}

\newcommand{\SELECT}{\,\textnormal{select}}
\newcommand\mco{\multicolumn{1}{|c|}{0}}
\newcommand\mcl{\multicolumn{1}{|c|}{1}}
\newcommand\clk{\ceiling{\lg k}}

\newif\ifcom
\comtrue

\newif\iflong
\longtrue

\newif\ifshort
\shortfalse

\newcommand{\ttl}{{\texttt{1}}}
\newcommand{\tto}{{\texttt{0}}}
\newcommand{\ttq}{{\texttt{?}}}

\newcommand{\lpiece}[3]{{{#1}}\langle {{#2}}\rangle_{{#3}}}
\newcommand{\lpieceq}[2]{{{#1}}\langle {{#2}}\rangle_{1}}

\newcommand{\xor}{\oplus}

\newcommand{\floor}[1]{\lfloor {#1} \rfloor}
\newcommand{\ceiling}[1]{\lceil {#1} \rceil}

\newcommand{\rank}{\mbox{\rm rank}}

\title{Dynamic Integer Sets with Optimal Rank, Select, and Predecessor Search
\footnote{Presented with different formatting in {\em Proceedings of the 55nd IEEE Symposium on Foundations of Computer Science (FOCS)}, 2014, pp. 166--175}} 

\author{\fbox{Mihai P\v{a}tra\c{s}cu\footnote{Passed away 2012.} }\and Mikkel Thorup\thanks{Research partly supported by an Advanced Grant from
    the Danish Council for Independent Research under the Sapere Aude
    research carrier programme.}\\
University of Copenhagen}
\date{\ \vspace{-1.5cm}}
\begin{document}
\maketitle
\begin{abstract}
We present a data structure representing a dynamic set $S$ of
$w$-bit integers on a $w$-bit word RAM. With $|S|=n$ and $w\geq\log n$ and
space $O(n)$, we support the following
standard operations in $O(\log n/\log w)$ time:
\begin{itemize}
\item insert$(x)$ sets $S=S\cup \{x\}$.
\item delete$(x)$ sets $S=S\setminus \{x\}$.
\item predecessor$(x)$ returns $\max\{y\in S\mid y< x\}$.
\item successor$(x)$ returns $\min\{y\in S\mid y\geq x\}$.
\item rank$(x)$ returns $\#\{y\in S\mid y< x\}$.
\item select$(i)$ returns $y\in S$ with $\rank(y)=i$, if any.
\end{itemize}
Our $O(\log n/\log w)$ bound is optimal for dynamic rank and select,
matching a lower bound of Fredman and Saks [STOC'89]. 
When the word length is large, our time bound is also optimal for
dynamic predecessor, matching a static lower bound of Beame and Fich
[STOC'99] whenever $\log n/\log w=O(\log w/\log\log w)$. 

Technically, the most interesting aspect of  our data structure is that it
supports all operations in constant time for sets of size $n=w^{O(1)}$.  This
resolves a main open problem of Ajtai, Komlos, and Fredman [FOCS'83].
Ajtai et al. presented such a data structure in Yao's abstract
cell-probe model with $w$-bit cells/words, but pointed out that the
functions used could not be implemented. As a partial solution to the
problem, Fredman and Willard [STOC'90] introduced a {\em fusion node\/} that
could handle queries in constant time, but used polynomial time on the
updates. Despite inefficient updates, this lead them to the first
sub-logarithmic bounds for sorting and dynamic predecessor search.
We call our data structure for small sets a {\em dynamic fusion node}
as it does both queries and updates in constant time.
\end{abstract}

\section{Introduction}
We consider the problem of representing a dynamic set $S$ of
integer keys so that we can efficiently support the
following standard operations:
\begin{itemize}
\item insert$(x)$ sets $S=S\cup \{x\}$.
\item delete$(x)$ sets $S=S\setminus \{x\}$.
\item member$(x)$ returns $[x\in S]$.
\item predecessor$(x)$ returns $\max\{y\in S\mid y< x\}$.
\item successor$(x)$ returns $\min\{y\in S\mid y\geq x\}$. We could also
have demanded $y>x$, but then the connection to rank and select below 
would be less elegant.
\item rank$(x)$ returns $\#\{y\in S\mid y< x\}$.
\item select$(i)$ returns $y\in S$ with $\rank(y)=i$, if any. 
Then $\PRED(x)=\SELECT(\RANK(x)-1)$ and $\SUCC(x)=\SELECT(\RANK(x))$.
\end{itemize}
Our main result is a deterministic linear space data structure supporting
all the above operations in $O(\log n/\log w)$ time where
$n=|S|$ is the set size and $w$ is the word length. This is
on the word RAM which models what can be implemented in a programming
language such as C~\cite{KR78} which has been used for fast portable 
code since 1978. Word operations take constant time.
The word size $w$, measured in bits,
is a unifying parameter of the model. All integers considered 
are assumed to fit in a word, and with $|S|=n$, we assume
$w\geq\log n$ so that we can at least index the elements in $S$. 
The random access memory of the word RAM implies that we can allocate
tables or arrays of words, accessing entries in constant time using indices that
may be computed from keys. This feature is used in many classic algorithms,
e.g., radix sort \cite{Com29} and hash tables \cite{Dum56}.

A unifying word size was not part of the original RAM model
\cite{CookR73}. However, real CPUs are tuned to work on words and
registers of a certain size $w$ (sometimes two sizes, e.g., 32 and 64 bit
registers).  Accessing smaller units, e.g., bits, is
more costly as they have to be extracted from words. On the other
hand, with unlimited word size, we can use word operations to encode a
massively parallel vector operations unless we make some artificial
restrictions on the available operations \cite{PS82}. The limited word
size is equivalent to a ``polynomial universe'' restriction where all
integers used are bounded by a polynomial in the problem size and sum
of the integers in the input. Restrictions of this type are present in
Yao's cell-probe model \cite{yao81tables} and in Kirkpatrick and
Reisch's integer sorting \cite{KR84}.  Fredman and Willard
\cite{fredman93fusion} made the word length explicit in their seminal
$O(n\log n/\log\log n)=o(n\log n)$ sorting algorithm.

Our $O(\log n/\log w)$ bound is optimal for dynamic rank and select,
matching a lower bound of Fredman and Saks
\cite{fredman89cellprobe}\footnote{For rank, see Theorem 3 and remark after Theorem 4 
in \cite{fredman89cellprobe}. The lower bound for select follows by another
simple reduction to be presented here.}. The lower bound is in Yao's cell-probe model \cite{yao81tables} 
where we only pay for probing cells/words in memory.
The lower bound is for the query time and holds
even with polylogarithmic update times. Conceivably, one
could get the same query time with constant time updates. Our optimality is for the maximal
operation time including both queries and updates.

When the word length is large, our time bound is also optimal for
dynamic predecessor, matching a cell-probe lower bound of Beame and Fich
\cite{beame02pred} whenever $\log n/\log w=O(\log w/\log\log w)$.
This lower bound is for the query time in the static case with any
polynomial space representation of the set $S$. 
Contrasting rank and select, predecessor
admits faster solutions when the word length is small, e.g., van Emde
Boas' \cite{vEB77pred} $O(\log w)$ time bound. The predecessor bounds
for smaller word lengths are essentially understood if we allow
randomization, and then our new bound completes
the picture for dynamic predecessor search. With key length $\ell\leq
w$, we get that the optimal expected maximal operation time for dynamic
predecessor (no rank or select) is 
\[
\Theta\left(\max\left\{1,\min \left\{
  \frac{\log n}{\log w},\ 
  \frac{\log \frac{\ell}{\log w}}{\log \left( \log \frac{\ell}{\log w} 
    \textrm{\large~/} \log\frac{\log n}{\log w} \right)},\ 
  \log \frac{\log (2^\ell-n)}{\log w}
 \right\}\right\}\right).\]
Our clean $O(\log n/\log w)$ bound for dynamic predecessor
search should be compared with Fredman and Willard's  \cite{fredman93fusion} bound
of $O(\log n/\log w + \sqrt{\log n})$\footnote{{\cite{fredman93fusion}} describe it as $O(\log n/\log b + \log b)$ for any
  $b\leq w^{1/6}$, from which they get $O(\log n/\log\log n)$ since
  $w\geq\log n$.}. For larger words, it was improved to $O(\log
n/\log w + \log\log n)$ by Andersson and Thorup
\cite{andersson07exptrees}, but the $\log\log n$ term still prevents
constant time when $\log n=O(\log w)$.

Technically, the most interesting aspect of our data structure is that
it operates in constant time for sets of size $n=w^{O(1)}$.  This
resolves a main open problem of Ajtai, Komlos, and Fredman
\cite{ajtai84hashing}.  Ajtai et al.~presented such a data structure
in Yao's abstract cell-probe model with $w$-bit cells/words, but
pointed out that the functions used could not be implemented. As a
partial solution to the problem, Fredman and Willard
\cite{fredman93fusion} introduced a {\em fusion node\/} that could
handle queries in constant time, but used polynomial time on the
updates. Despite inefficient updates, this lead them to the first
sub-logarithmic bounds for dynamic predecessor searching mentioned above.  We call
our data structure for small sets a {\em dynamic fusion node} as it
does both queries and updates in constant time. 

Fredman and Willard \cite{fredman94atomic} later introduced atomic
heaps that using tables of size $s$ can handle all operations in
constant time for sets of size $(\log s)^{O(1)}$.  Our focus
is on understanding the asymptotic impact of a large word length. We only use space linear in the number of stored keys, but our capacity for sets of
size $w^{O(1)}$ with constant operation time is the largest
possible regardless of the available space. Such a large set
size is new even in the simple case of a deterministic dynamic dictionary with
just membership and updates in constant time.

Like Fredman and Willard \cite{fredman93fusion,fredman94atomic}, we do
use multiplication. Thorup \cite{Tho03} has proved that this is
necessary, even if we just want to support membership in constant time
for sets of size $\Omega(\log n)$ using standard
instructions. However, as in \cite{andersson99fusion} for the classic
static fusion nodes, if we allow self-defined non-standard operations,
then we can implement our algorithms using AC$^0$ operations only. The
original cell-probe solution from \cite{ajtai84hashing} does not seem to
have an AC$^0$ implementation even if we allow non-standard operations.

\drop{
Orthogonal to our work, Cohen et al. \cite{CohenFHK13} have looked into
succinct implementations of static fusion nodes, minimizing the extra 
space per key,  but sacrificing the constant operation time. 
Here we are satisfied with linear space like in the x fusion nodes.
}

We emphasize that our new dynamic fusion nodes generally both simplify
and improve the application of fusion nodes in dynamic settings, most notably
in the original fusion trees \cite{fredman93fusion} which with the
original fusion nodes had to switch to a different technique towards
the bottom of the trees (c.f. Section \ref{sec:set-n}), yet never got optimal
results for large word-lengths. Fusion nodes are useful in many
different settings as illustrated by Willard \cite{willard92fusion}
for computational geometry.

\paragraph{Contents and techniques} Our new dynamic fusion node takes starting point in ideas and techniques from \cite{ajtai84hashing,fredman93fusion} bringing
them together in a tighter solution that for sets of size up to
$k=w^{1/4}$ supports all operations in constant time using only
standard instructions. In Section \ref{sec:prelim} we review the
previous techniques, emphasizing the parts and ideas that we will
reuse, but also pointing to their limitations. In Section
\ref{sec:dyn-fusion}, we describe the new dynamic fusion nodes,
constituting our core technical contribution. An important feature is
that we use matching with don't cares to code the search in a compressed
trie. In Section
\ref{sec:set-n}, we show how to use our dynamic fusion nodes in a
fusion tree for arbitrary set size $n$, supporting all operations in
time $O(\log n/\log k)=O(\log n/\log w)$. This is $O(1)$ for 
$n=w^{O(1)}$.  Because our fusion nodes are dynamic, we avoid the
dynamic binary search tree at the bottom of the original fusion trees
\cite{fredman93fusion}. Our fusion trees are augmented to handle rank
and select, using techniques from
\cite{dietz89sums,patrascu04connect}. In Section \ref{sec:lower} we
describe in more details how we get matching lower bounds from
\cite{fredman89cellprobe,beame02pred}. Finally, in Section
\ref{sec:ran-pred}, we complete the picture for randomized dynamic
predecessor search with smaller key lengths using the techniques from
\cite{patrascu06pred}.

\section{Notation and review of previous techniques}\label{sec:prelim}
We let $\lg$ denote $\log_2$ and $m=\{0,...,m-1\}$. Integers are represented in $w$-bit words with the least significant bit the to the right.
We shall use $\tto$ and $\ttl$ to denote the bits $0$ and $1$, contrasting
the numbers $0$ and $1$ that in words are padded with $w-1$ leading $\tto$s.
When working with $w$-bit integers, we will use the standard arithmetic
operations $+$, $-$, and $\times$, all working modulo
$2^w$. We will use the standard bit-wise Boolean operations: $\AND$ (and), $\OR$ (or), $\XOR$ (exclusive-or), and $\NOT$ (bit-negation). 
We also have left shift $\LSHIFT$ and right shift $\RSHIFT$.

As is standard in assignments, e.g., in C \cite{KR78}, if we assign an $\ell_0$-bit number $x$ to a
$\ell_1$-bit number $y$ and $\ell_0<\ell_1$, then $\ell_1-\ell_0$ leading
$\tto$s are added. If $\ell_0>\ell_1$, the $\ell_0-\ell_1$ leading
bits of $x$ are discarded.

\paragraph{Fields of words}
Often we view words as divided into fields of some length $f$. We
then use $\lpiece x i f$ to denote the $i$th field, starting from
the right with $\lpiece x 0 f$ the right most field. Thus
$x$ represents the integer $\sum_{i=0}^{w-1} 2^i\lpiece x i 1$.
Note that fields can easily be masked out using regular instructions, e.g., 
\[\lpiece x i f= (x\RSHIFT (i\times f))\AND ((1\LSHIFT f)-1).\]
A field assignment like $\lpiece x i f=y$ is implemented as
\[x=(x\AND \NOT m)\OR ((y\LSHIFT (i\times f))\AND m)\textnormal{ where }
m=((1\LSHIFT f)-1)\LSHIFT(i\times f)\]
Similarly, in constant time, we can mask out intervals of fields, using
\begin{align*}
\lpiece x {i..j} f&= (x\RSHIFT 
(i\times f))\AND ((1\LSHIFT ((j-i)\times f))-1),\\
\lpiece x {i..*} f&= (x\RSHIFT (i\times f)).
\end{align*}
For two-dimensional divisions of words into fields, we use
the notation
\[\lpiece x {i,j} {g\times f} =\lpiece x {i\times g+j} {f}.\]

\paragraph{Finding the most and least significant bits in constant time}
We have an operation $\MSB(x)$ that for an
integer $x$ computes the index of its most significant set bit. Fredman
and Willard \cite{fredman93fusion} showed how to implement this in
constant time using multiplication, but $\MSB$ can also be
implemented very efficiently by assigning $x$ to a floating point
number and extract the exponent. 
A theoretical advantage to using the
conversion to floating point numbers is that we avoid the 
universal constants depending on $w$ used in \cite{fredman93fusion}.

Using $\MSB$, we can also easily find
the least significant bit of $x$ as
$\LSB(x)=\MSB((x-1)\XOR x)$.

\paragraph{Small sets of size \boldmath{$k$}}
Our goal is to maintain a dynamic set of size $k=w^{\Theta(1)}$. 
Ajtai et al.~ \cite{ajtai84hashing} used
$k=w/\log w$ whereas Fredman and Willard \cite{fredman93fusion} used
$k=w^{1/6}$. We will ourselves use $k=w^{1/4}$. The exact value makes no theoretical difference, as our overall bound is $O(\log n/\log k)=O(\log n/\log w)$, which is $O(1)$ for $k=w^{\Omega(1)}$ and $n=w^{O(1)}$.

For simplicity, we assume below that $k$ is fixed, and we define various constants
based on $k$, e.g., $(\tto^{k-1}\ttl^k)^k$. However, using doubling, our
constants can all be computed in $O(\log k)$ time. We let $k$ be a power of two,
that we double as the sets grow larger. Then the cost of
computing the constants in the back ground is negligible.

\paragraph{Indexing}
We will store our key set $S$ in an unsorted array $KEY$ with
room for $k$ $w$-bit numbers. We will also maintain an array $INDEX$ of
$\ceiling{\lg k}$-bit indices so that $\lpiece {INDEX} i {\ceiling{\lg k}}$ is the index 
in $KEY$ of the key in $S$ of rank $i$ in the sorted order. An important point
here is that $INDEX$ with its $k\ceiling{\lg k}$ bits fits in a single word. Selection
is now trivially implemented as 
\[\SELECT(i)=KEY\left[\lpiece {INDEX} i \clk\right].\]
When looking for the predecessor of a query key $x$, we will
just look for its rank $i$ and return $\SELECT(i)$.

Consider the insertion of a new key $x$, supposing that we have found
its rank $i$. To support finding a free slot in $KEY$, we maintain
a bit map $bKEY$ over the used slots, and use $j=\MSB(bKEY)$ as the first free
slot. To insert $x$, we set $KEY[j]=x$ and $\lpiece {bKEY} {j} 1=0$.
To update $INDEX$, we set $\lpiece {INDEX} {i+1..k} {\ceiling{\lg k}}
=\lpiece {INDEX} {i..k-1} {\ceiling{\lg k}}$ and $\lpiece {INDEX} {i} 
{\ceiling{\lg k}}=j$. Deleting a key is just reversing the above.

\paragraph{Binary trie}
A binary trie \cite{fredkin60trie} for $w$-bit keys is a binary tree where 
the children of an internal node are labeled
$\tto$ and $\ttl$, and where all leaves are at depth $w$. A leaf
corresponds to the key represented by the bits along the path to the
root, the bit closest to the leaf being the least significant. Let
$T(S)$ be the trie of a key set $S$. We define the level $\ell(u)$ of an 
internal trie node $u$, to be the height of its children. 
To find the predecessor of a key
$x$ in $S$ we follow the path from the root of $T(S)$, matching the
bits of $x$.  When we get 
to a node $u$ in the search for $x$, we pick the child labeled $\lpiece x {\ell(u)} 1$.
If $x\not\in S$, then at some stage we get to a node $u$ with
a single child $v$ where $\lpiece x {\ell(u)} 1$ is opposite the bit of $v$. 
If $\lpiece x {\ell(u)} 1=1$, the predecessor of $x$ in $S$
the largest key below $v$.  Conversely, if $\lpiece x {\ell(u)} 1=0$, the successor of $x$ in $S$ is the smallest key below $v$.

\paragraph{Compressed binary trie}
In a compressed binary trie, or Patricia trie \cite{morrison68patricia},
we take the trie and short cut all paths of degree-1 nodes so
that we only have degree-2 branch nodes and leaves, hence at most $2k-1$
nodes.  This shortcutting does not change the level $\ell$ 
of the remaining nodes.

To search a key $x$ in the compressed trie $T^c(S)$, we
start at the root.  When we are at node $u$, as in the trie,
we go to the child labeled $\lpiece x {\ell(u)} 1$. Since all interior nodes have two
children, this search always finds a unique leaf corresponding to some key $y\in S$,
and we call $y$ the match of $x$.

The standard observation about compressed tries is that if $x$ matches
$y\in S$, then no key $z\in S$ can have a longer common prefix with $x$
than $y$. To see this, first note that if $x\in S$, then the search
must end in $x$, so we may assume that $x$ matches some $y\neq x$.
Let $j$
be the most significant bit where $x$ and $y$ differ, that is,
$j=\MSB(x\xor y)$.  If $z$ had a longer common prefix with $x$ then it
would branch from $y$ at level $j$, and then the search for $x$ should
have followed the branch towards $z$ instead of the branch to $y$. The
same observation implies that there was no branch node on level $j$
above the leaf of $y$. Let $v$ be the first node below level $j$ on the search
path to $y$. As for the regular trie we conclude that if $\lpiece x
{\ell(u)} 1=1$, the predecessor of $x$ in $S$ the largest key below $v$.
Conversely, if $\lpiece x {\ell(u)} 1=0$, the successor of $x$ in $S$
is the smallest key below $v$.

To insert the above $x$, we would just insert a branch node $u$ on level $j$
above $y$. The parent of $u$ is the previous parent of
$v$. A new leaf corresponding to $x$ is the $\lpiece x j 1$-child of
$u$ while $v$ is the other child.

\paragraph{Ajtai et al.'s cell probe solution}
We can now easily describe the cell probe solution of Ajtai et
al. \cite{ajtai84hashing}. We are going to reuse the parts 
that can be implemented on a word RAM, that is, the parts that only
need standard word operations.

As described previously, we assume that our set $S$ is stored in 
an unordered array $KEY$ of $k$ words plus a single word $INDEX$ such that
$KEY[\lpiece {INDEX} i \clk]$ is the key of rank $i$ in  $S$.

We represent the compressed trie in an array $T^c$ with $k-1$ entries
representing the internal nodes, the first entry being the root. 
Each entry needs its
level in  $[w]$ plus an index in $[k]$ to each child, or nil if the child is a leaf. 
We thus need
$O(\log w)$ bits per node entry, or $O(k\log w)=o(w)$ bits for the
whole array $T^c$ describing the compressed trie. The ordering of 
children induces an ordering of the leaves, and 
leaf $i$, corresponds to key $KEY[\lpiece {INDEX} i \clk]$.

In the cell-probe model, we are free to define arbitrary word operation
involving a constant number of words, 
including new operations on words representing compressed tries. We define
a new word operation $TrieSearch(x,T^c)$ that given a key $x$ and a correct
representation of a compressed trie $T^c$, returns the index $i$ of the
leaf matching $x$ in $T^c$. We then look up $y=KEY[\lpiece {INDEX} i \clk]$, and
compute $j=\MSB(x\XOR y)$. A second new word operation $TriePred(x,T^c,j)$ returns the rank $r$
of $x$ in $S$, assuming that $x$ branches of
from $T^c$ at level $j$.

To insert a key $x$ in the above data structure, we first compute the rank $r$ of $x$ in $S$ and then we
insert $x$ in $KEY$ and $INDEX$, as described previously.
A third new word operation $TrieInsert(T^c,x,j,r)$ changes $T^c$,
inserting a new leaf at rank $r$, adding the appropriate branch
node on level $j$ with child $\lpiece x j 1$ being the new leaf.

Summing up, in the cell probe model, it is easy to support all our
operations on a small dynamic set of integers, exploiting that a
compressed trie can be coded in a word, hence that we can navigate
a compressed trie using specially defined word operations.  Ajtai et al. 
\cite{ajtai84hashing} write ``{\em Open Question.} We conclude by stating the following problem. Our query
algorithm is not very realistic, as searching through a trie requires more than
constant time in actuality''.

\paragraph{Fredman and Willard's static fusion nodes with compressed keys}
Fredman and Willard \cite{fredman93fusion} addressed the above
problem, getting the searching in to constant time using only standard
instructions, but with no efficient updates. Their so-called {\em
  fusion node\/} supports constant time predecessor searches among up
to $w^{1/6}$ keys. However, with $k$ keys, it takes $O(k^4)$ time for them to
construct a fusion node.

Fredman and Willard proved the following lemma that we shall use repeatedly:
\begin{lemma}\label{lem:rank} Let $mb\leq w$. 
If we are given a $b$-bit number $x$ and a word $A$ with  $m$ $b$-bit numbers
 stored in sorted order, that is, $\lpiece A
0 b<\lpiece A 1 b<\cdots<\lpiece A {m-1} b$, then in constant time, 
we can find the rank of $x$ 
in $A$, denoted
$\RANK(x,A)$. 
\end{lemma}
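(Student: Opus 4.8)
The plan is to reduce the computation of $\RANK(x,A)$ to a single comparison of $x$ against all $m$ stored numbers in parallel, using the classic ``broadcast, subtract, extract top bits'' trick. First I would form a word $X$ holding $m$ copies of $x$, each in its own $b$-bit field; this is done by multiplying $x$ by the constant $(\tto^{b-1}\ttl)^m$ — more precisely, $\sum_{i=0}^{m-1} 2^{ib} = (1\LSHIFT (mb)-1)/(1\LSHIFT b - 1)$ — which is a fixed constant we may precompute. The subtlety that must be handled throughout is that a naive field-wise subtraction $A - X$ would let borrows propagate across field boundaries, corrupting neighbouring fields. The standard fix is to widen each field by one extra ``test'' bit: work with $(b+1)$-bit fields, setting the test bit of every field of $A$ to $\ttl$ and the test bit of every field of $X$ to $\tto$. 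Then, for field $i$, subtracting $\lpiece X i {b+1}$ from $\lpiece A i {b+1}$ leaves the test bit equal to $\ttl$ iff $\lpiece A i b \geq \lpiece x {} {}$, i.e.\ iff $\lpiece A i b \geq x$; any borrow generated within the field is absorbed by the test bit and never escapes to field $i+1$. (Repacking $A$ into $(b+1)$-bit fields with test bits set costs a constant number of shifts, masks, and an $\OR$ with the constant $(\tto^b\ttl)^m$, and this is where we use the slack $mb\le w$ — after widening we have $m(b+1)\le w+m\le 2w$, so at worst two words, but in the paper's uses $m(b+1)\le w$ can be arranged, or one works with a spare bit; I would state it for $m(b+1)\le w$ and note the trivial adjustment.)

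Next I would collect the test bits. After the parallel subtraction, mask out everything except the test bits, obtaining a word $Z$ in which the test bit of field $i$ is $\ttl$ exactly when $\lpiece A i b \ge x$. Since $A$ is sorted in increasing order, the fields satisfying $\lpiece A i b \ge x$ form a suffix $i = r, r+1, \dots, m-1$ where $r = \RANK(x,A)$ (with $r=m$ if $x$ exceeds all stored numbers, in which case $Z=0$). Hence $r$ is recovered as $m$ minus the number of set test bits, or equivalently from the position of the least significant set bit: if $Z \neq 0$ then $r = \LSB(Z)/(b+1)$, and if $Z=0$ then $r=m$. Either route is constant time — the first uses a popcount, itself implementable in constant time via a multiplication by $(\tto^b\ttl)^m$ and a shift, and the second uses the $\LSB$ operation already provided in the preliminaries as $\LSB(x)=\MSB((x-1)\XOR x)$.

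The only real obstacle is bookkeeping the field widths so that nothing overflows a word and no borrow leaks between fields; once the one-extra-bit-per-field device is in place this is routine, and every remaining step is a fixed-length sequence of $\AND$, $\OR$, $\LSHIFT$, $\RSHIFT$, $-$, $\times$, $\MSB$/$\LSB$ operations, so the total time is $O(1)$ as claimed. I would also remark that the constants $(\tto^{b-1}\ttl)^m$, $(\tto^b\ttl)^m$, etc., depend only on $b$ and $m$ and are computed once (or maintained incrementally under doubling, as discussed earlier for the constants depending on $k$), so their cost is not charged to the query. \qed
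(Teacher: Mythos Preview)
Your proposal is correct and follows essentially the same approach as the paper, which only gives a one-sentence sketch: broadcast $x$ into $m$ fields via multiplication, perform a parallel subtraction to encode the field-wise comparisons, and extract the answer with $\MSB$ (you use the complementary $\LSB$ on the $A_i\geq x$ bits, which is equivalent). Your additional discussion of the extra test bit per field to prevent borrow propagation and of the $m(b+1)\leq w$ slack is the standard Fredman--Willard bookkeeping that the paper leaves implicit, so you have simply filled in details the paper omits rather than taken a different route.
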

The basic idea behind the lemma is to first use
multiplication to create $x^m$ consisting of $m$ copies of $x$; then
use a subtraction to code a coordinate-wise comparison $\lpiece A
i b<x$ for every $i$, and finally use $\MSB$ to find the largest such  $i$.

Next, they note that for the ordering of the keys in $S$, it suffices
to consider bits in positions $j$ such that for some $y,z\in S$,
$j=\MSB(y\xor z)$.  These are exactly the positions where we have
branch node somewhere in the trie.  Let $c_0<..<c_{\ell-1}$ be these
significant positions. Then $\ell<k$.

After computing some variables based on $S$, they show that in
constant time, they can compress a key $x$ so as to (essentially) only
contain the significant positions (plus some irrelevant $\tto$s).  Their
compressed key $\hat x$ is of length $b\leq k^3$. Let $\hat S$ denote a
sorted array with the compressed versions of the keys from $S$. With
$k^4\leq w$, the array fits in a single word, and they can therefore
compute $i=\rank(\hat x,\hat S)$ in constant time. Then $\lpiece {\hat S} i b$ is
the predecessor of $\hat x$ in $\hat S$, and $\lpiece {\hat S} {i+1} b$ is the
successor. One of these, say $\hat y=\lpiece {\hat S} i b$, 
has the longest common prefix with
$\hat x$, and they argue that then the original key $y=KEY\left[\lpiece {INDEX} i \clk\right]$ must also have
the longest common prefix with $x$ in $S$.

Next, they compute 
$j=\MSB(x\XOR y)$ and see where it fits between the significant bits,
computing $h=\RANK(j,(c_0,...,c_\ell))$, that is, $c_h<j\leq c_{h+1}$.
Referring to the compressed trie representation, $i$ and $h$ tells us
exactly in which shortcut edge $(u,v)$, the key $x$ branches out.
The predecessor of $x$ is the largest key below $y$ if $\lpiece x j 1=1$;
otherwise, the successor is the smallest key below $y$. The important
point here is that they can create a table $RANK$ that for each value
of $i\in [k]$, $h\in [k]$, and $\lpiece x j 1\in [2]$, returns the rank $r\in [k]$
of $x$ in $S$. From the rank we get the predecessor 
$KEY\left[\lpiece {INDEX} r \clk\right]$, all in constant time.

Fredman and Willard \cite{fredman93fusion} construct the above
fusion node in $O(k^4)$ time. While queries take
constant time, updates are no more efficient than computing the fusion
node from scratch.  
\drop{
Fredman
and Willard \cite{fredman94atomic} later presented their atomic heaps
using preprocessed tables of size $s$ handling all operations for
set sizes up to $(\log s)^{1/5}$.  The basic points is that
they can tabulate any operation involving two
$(\log s)/2$ bit integers. In this paper, we study the asymptotic
impact of a large word length $w$ which is not bounded by
the logarithm of the available space. The space we use is
only linear in the size of the set represented. }

\section{Dynamic fusion nodes}\label{sec:dyn-fusion}
We are now going to present our dynamic fusion nodes that for given
$k\leq w^{1/4}$ maintains a dynamic set $S$ of up to $k$ $w$-bit
integers, supporting both updates and queries in constant time using
only standard operations on words, thus
solving the open problem of Ajtai et al. \cite{ajtai84hashing}. 

As before, we
assume that our set $S$ is stored in an unordered array $KEY$ of $k$
words plus a single word $INDEX$ such that $\SELECT(i)=KEY[\lpiece
  {INDEX} i \clk]$ is the key of rank $i$ in $S$. As described
in the introduction, we can easily compute predecessor and
successor using rank and select.

\subsection{Matching with don't cares}
One of the problems making Fredman and Willard's fusion nodes dynamic
is that when a key is inserted, we may get a new significant position
$j$.  Then, for every key $y\in S$, we have to insert $\lpiece y j 1$ into
the compressed key $\hat y$, and there seems to be no easy way to
support this change in constant time per update.  In contrast, with a compressed
trie, when a new key is inserted, we only have to include a single new
branch node. However, as stated by Ajtai et al.
\cite{ajtai84hashing}, searching the compressed trie takes more than
constant time.

Here, we will introduce don't cares in the compressed keys so as to
make a perfect simulation of the compressed trie search in constant
time, and yet support updates in constant time.  Introducing don't
cares may seem rather surprising in that they normally make problems
harder, not easier, but in our context of small sets, it turns out
that using don't cares is just the right thing to do.  The basic idea
is illustrated in Figure \ref{fig:representation}.

\begin{figure}
\begin{center}
\hspace{-5cm}{\tt \begin{tabular*}{3ex}{*{8}{c}|c}
7&6&5&4&3&2&1&0 \\\cline{1-9}
1&1&1&1&1&0&1&0&4\\
1&1&0&1&1&0&1&0&3\\
1&1&0&1&0&0&1&0&2\\
1&0&0&1&0&0&1&0&1\\
1&0&0&1&0&0&0&1&0\\
\end{tabular*}}\\[2ex]
An array with four $8$-bit keys
with bit positions on top and ranks on the right.\medskip

\hspace{-5cm}{\tt
\begin{tabular*}{3ex}{*{8}{c}|c}
7&6&5&4&3&2&1&0 \\\cline{1-9}
 & &\mcl& & & & &&4 \\\cline{3-3}\cline{5-5}
 & &\mco& &\mcl& & & &3\\\cline{2-5}
 &\mcl& & &\mco& & &&2 \\\cline{1-2}\cline{5-5}\cline{7-7}
 &\mco& & & & &\mcl&&1 \\\cline{2-7}
 & & & & & &\mco& &0\\\cline{7-7} 
\end{tabular*}}\\[2ex]
Compressed trie as used by Ajtai et al.~\cite{ajtai84hashing}
\ \bigskip

\hspace{-4cm}{\tt \begin{tabular*}{3ex}{*{4}{c}|c}
6&5&3&1 \\\cline{1-5}
1&1&1&1&4\\
1&0&1&1&3\\
1&0&0&1&2\\
0&0&0&1&1\\
0&0&0&0&0\\[-2ex]
\rule{3ex}{0ex}&\rule{3ex}{0ex}&\rule{3ex}{0ex}&\rule{3ex}{0ex}&\rule{3ex}{0ex}
\end{tabular*}}\\[2ex]
Compressed keys as used by Fredman and Willard \cite{fredman93fusion}.
\ \bigskip

\hspace{-4cm}{\tt \begin{tabular*}{3ex}{*{4}{c}|c}
6&5&3&1 \\\cline{1-5}
1&1&?&?&4\\
1&0&1&?&3\\
1&0&0&?&2\\
0&?&?&1&1\\
0&?&?&0&0\\[-2ex]
\rule{3ex}{0ex}&\rule{3ex}{0ex}&\rule{3ex}{0ex}&\rule{3ex}{0ex}&\rule{3ex}{0ex}
\end{tabular*}}\\[2ex]
Compressed keys but with don't cares (\texttt?) in positions that are not
used for branching in the (compressed) trie.

\ \bigskip

\hspace{-4cm}{\tt \begin{tabular*}{3ex}{*{4}{c}|c}
6&5&3&1 \\\cline{1-5}
1&1&$x_3$&$x_1$&4\\
1&0&1&$x_1$&3\\
1&0&0&$x_1$&2\\
0&$x_5$&$x_3$&1&1\\
0&$x_5$&$x_3$&0&0\\[-2ex]
\rule{3ex}{0ex}&\rule{3ex}{0ex}&\rule{3ex}{0ex}&\rule{3ex}{0ex}&\rule{3ex}{0ex}
\end{tabular*}}\\[2ex]
When searching a key $x$, we replace don't cares 
at position $j$, with the $j$th bit $x_j$ of $x$.
\end{center}
\caption{Representations of a given key set.}\label{fig:representation}
\end{figure}

Initially, we ignore the key compression, and focus only on our
simulation of the search in a compressed trie.  For a key $y\in S$, we
will only care about bits $\lpiece y j 1$ such that $j=\MSB(y \XOR z)$
for some other $z\in S$. An equivalent definition is that if the trie has a
branch node $u$ at some level $j$, then we care about position $j$ for
all keys $y$ descending from $u$. We will now create a key $y^{\ttq}$ from $y$
using characters from $\{\tto,\ttl,\ttq\}$. Even though these characters 
are not just bits, we let $\lpieceq {y^{\ttq}} j$ denote the $j$th 
character of $y$. We
set $\lpieceq {y^{\ttq}} j=\ttq$ if we do not care about position $j$ in $y$;
otherwise $\lpieceq {y^{\ttq}} j=\lpiece y j 1$. 

A query key $x$ {\em matches\/} $y^{\ttq}$ if and only if $\lpiece x j
1=\lpieceq {y^{\ttq}} j$ for all $j$ such that $\lpieceq {y^{\ttq}} j\neq
\ttq$. Then $x$ matches $y^{\ttq}$ if and only if we get to $y$
when searching $x$ in the compressed trie over $S$.  Since such a search
in a compressed trie is always successful, we know that every $x$ has
a match $y^{\ttq}$, $y\in S$.
\begin{observation}\label{obs:trie-search} 
Let $y^{\ttq}$ be the match of  $x$, and suppose $y\neq x$. 
Set $j=\MSB(x \XOR y)$.  If $x<y$, then $x\AND (\ttl^{w-j}\tto^j)$ matches
$z^{\ttq}$ where $z$ is the successor of $x$. If $x>y$, then $x\OR
(\tto^{w-j}\ttl^j)$ matches $z^{\ttq}$ where $z$ is the predecessor of $x$.
\end{observation}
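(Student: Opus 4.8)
The plan is to unfold the definition of ``match'' in terms of the compressed trie search and reason about where the search paths for $x$ and for the modified key diverge from the search path for $y$. Recall that $x$ matching $y^{\ttq}$ means exactly that the compressed trie search for $x$ ends at the leaf $y$, and that $j=\MSB(x\XOR y)$ is the highest position where $x$ and $y$ disagree. As noted in the discussion of compressed tries just before Observation~\ref{obs:trie-search}, there is no branch node on level $j$ above the leaf of $y$, and the first node $v$ strictly below level $j$ on the search path to $y$ is the node from which the successor/predecessor of $x$ hangs: if $\lpiece x j 1=0$ (equivalently $x<y$, since $\lpiece x j 1$ and $\lpiece y j 1$ differ at the top disagreeing bit, so $x<y \iff \lpiece x j 1=0$) then the successor of $x$ is the smallest key below $v$, and symmetrically if $x>y$ the predecessor is the largest key below $v$.

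First I would treat the case $x<y$. Let $x'=x\AND(\ttl^{w-j}\tto^j)$, i.e.\ $x$ with its lowest $j$ bits cleared. I claim the compressed trie search for $x'$ ends at the leaf $z=\SUCC(x)$, which gives $x'$ matches $z^{\ttq}$. To see this, note $x'$ agrees with $x$ in all positions $\geq j$; hence along the portion of the trie search that only tests positions $\geq j$, the searches for $x$ and $x'$ take identical branches, so $x'$ reaches the same node $v$ (the first node below level $j$ on the path to $y$) that $x$ reaches. From $v$ downward, every tested position is $<j$, and in all those positions $x'$ is $\tto$; therefore from $v$ the search for $x'$ always follows the $\tto$-child, which by definition leads to the smallest key below $v$. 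By the compressed-trie fact quoted above, that smallest key below $v$ is precisely $\SUCC(x)$. (One should double-check the boundary: $v$ is the child entered by a branch node at some level $>j$ taking the branch dictated by a position $>j$, so $v$ and all its descendants share with $x$ all bits in positions $\geq$ the level of $v$'s parent; and the level of $v$ itself is $<j$ since there is no branch node at level exactly $j$ above $y$.)

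The case $x>y$ is completely symmetric: set $x''=x\OR(\tto^{w-j}\ttl^j)$, so $x''$ agrees with $x$ in positions $\geq j$ and is $\ttl$ in all positions $<j$; the search for $x''$ again reaches the same node $v$, then from $v$ always follows the $\ttl$-child, reaching the largest key below $v$, which is $\PRED(x)$. The main obstacle I anticipate is being careful about the exact level of $v$ and why no branch node sits at level exactly $j$ above $y$ — this is where the defining property of the compressed-trie match ($y$ has the longest common prefix with $x$ among all keys in $S$, proved in the ``Compressed binary trie'' paragraph) is used, and it is the crux of the argument; once that is pinned down, the rest is just observing that clearing (resp.\ setting) the low $j$ bits forces the search to go monotonically left (resp.\ right) below $v$.
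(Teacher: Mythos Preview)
Your argument is correct and follows essentially the same route as the paper's proof: both identify the shortcut edge (equivalently, the first node $v$ below level $j$ on the path to $y$) by using that position $j$ is a don't care for $y$, then observe that the modified key agrees with $x$ on all bits tested above $v$ and is all-$\tto$ (resp.\ all-$\ttl$) on bits tested below $v$, so the search lands on the smallest (resp.\ largest) key under $v$, which the earlier compressed-trie discussion identifies as the successor (resp.\ predecessor). Your version is simply more expanded, making explicit the equivalence $x<y\iff\lpiece x j 1=\tto$ and the boundary reasoning that the paper leaves implicit.
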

\begin{proof}
Since $x$ matches $y^{\ttq}$, we know that
$\lpiece y j 1=\ttq$. Let $(u,v)$ be the shortcut in the compressed trie
bypassing position $j$. If $x<y$, the successor of $x$ is the smallest
key below $v$, which is exactly what we find when we search $x\AND (\ttl^{w-j}\tto^j)$ since we always pick the $\tto$-child starting from $v$.
Likewise if  $x>y$, the predecessor of $x$ is the largest
key below $y$, which is exactly what we find when we search $x\OR 
(\tto^{w-j}\ttl^j)$ since we always pick the $\ttl$-child starting from $v$.
\end{proof}
Observation \ref{obs:trie-search} implies that two matchings
suffice to find the predecessor or successor of a key $x$. The
second matching eliminates the need of the
table RANK in the original static fusion nodes described above.  This
two step approach is similar to the ``blind'' compressed trie search for
integers by Grossi et al.~\cite{grossi09succinct}. However,
Grossi et al.~do not use our matching with don't cares. Their data structure is static without updates, and they use 
tables of size $2^{\Omega(w)}$ like in Fredman and Willard's 
atomic heaps \cite{fredman94atomic}. 

Next we observe that the key compression of fusion nodes does not
affect the matching of keys with don't cares, for if we for
some key care for position $j$, then by definition this is a significant
position. Conversely this means that an insignificant position is
a don't care in all keys, so skipping insignificant positions does not
affect the matching, that is, $x$ matches $y^{\ttq}$ if and only
if $\hat x$ matches $\hat y^{\ttq}$ where $\hat x$ and $\hat y^{\ttq}$ are
the compressed version of $x$ and $y^{\ttq}$.

In Section \ref{sec:key-compress}, we will show how to maintain a 
perfect compression which keeps the significant positions only.
Thus, if $\ell\leq k$ is the number of significant positions, then
$\hat x$ and $\hat y^{\ttq}$ will both of length $\ell$ (recall that Fredman
and Willard \cite{fredman93fusion} had $k^3$ bits in their compressed keys,
and they did not show how to update the
compression when new keys got inserted). For consistent
formatting, we view compressed keys as having $k$ bits with $k-\ell$ leading $\tto$s.

To check if $\hat x$ matches $\hat y^{\ttq}$, we will create
$\hat y^{\hat x}$ such that $\lpiece {\hat y^{\hat x}} h 1=\lpiece {\hat y} h 1$ if $\lpiece {\hat y^{{\ttq}}} h 1\neq\ttq$ while $\lpiece {\hat y^{\hat x}} h 1=\lpiece {\hat x} h 1$ if $\lpiece {\hat y^{{\ttq}}} h 1=\ttq$.
Then $\hat x$ matches $\hat y^{\ttq}$ if and only if 
$\hat x=\hat y^{\hat x}$.

Note that if $y,z\in S$ and $x$ is any $w$-bit key, then $y<z\iff 
\hat y^{\hat x}<\hat z^{\hat x}$. This is because the bits from
$x$ do not introduce any new branchings between the keys $y,z\in S$.

We will create $\hat y^{\hat x}$ for all $y\in S$ simultaneously in
constant time. To this end, we maintain two $k\times k$ bit matrices
$BRANCH$ and $FREE$ that together represent the stored compressed keys
with don't cares. Let $y$ be the key of rank $i$ in $S$, and let $c_h$
be a significant position.  If $\lpieceq {\hat y^{\ttq}} h\neq \ttq$,
then $\lpiece {BRANCH} {i,h} {k\times 1}= \lpiece {\hat y} h 1$
and $\lpiece {FREE} {i,h} {k\times 1}=\tto$. If $\lpieceq {\hat
  y^{\ttq}} h={\ttq}$, then $\lpiece {BRANCH} {i,h} {k\times 1}=\tto$ and
$\lpiece {FREE} {i,h} {k\times 1}=\ttl$. For a compressed key $\hat
x$, let $\hat x^k$ denote $x$ repeated $k$ times, obtained 
multiplying $x$ by the constant $(\tto^{k-1}\ttl)^k$.
Then $BRANCH\OR(\hat x^k\AND FREE)$ is the desired array of the $\hat y^{\hat x}$ for $y\in S$, and
we define
\[\MATCH(x)=\RANK(\hat x,BRANCH\OR(\hat x^k\AND FREE)).\]
Here $\RANK$ is implemented as in Lemma \ref{lem:rank}, and 
then $\MATCH(x)$ gives the index of the $y^{\ttq}$ matching $x$. By
Observation \ref{obs:trie-search}, we can
therefore compute the rank of $x$ in $S$ as follows.
\paragraph{$\RANK(x)$}
\begin{itemize}
\item $i=\MATCH(x)$.
\item $y=KEY\left[\lpiece {INDEX} i \clk\right]$.
\item if $x=y$ return $i$.
\item $j=\MSB(x\XOR y)$.
\item $i_0=\MATCH(x\AND (\ttl^{w-j}\tto^j))$.
\item $i_1=\MATCH(x\OR (\tto^{w-j}\ttl^j))$.
\item if $x<y$, return $i_0-1$.
\item if $x>y$, return $i_1$.
\end{itemize}
\paragraph{Inserting a key}
We will only insert a key $x$ after we have applied $\RANK(x)$ as above,
verifying that $x\not\in S$. This also means that we have
computed $j$, $i_0$ and $i_1$, where $j$ denotes
the position where $x$ should branch off as a leaf child. The keys
below the other child are exactly those with indices $i_0,...,i_1$.

Before we continue, we have to consider if $j$ is a new significant
position, say with rank $h$ among the current significant positions.
If so, we have to update the compression function to include $j$, as we
shall do it in Section \ref{sec:key-compress}. The first 
effect on the current compressed key arrays is to insert a column $h$ of
don't cares. This is a new column of $\tto$s in the $BRANCH$ array, 
and a column of $\ttl$s in the $FREE$ array. To achieve this effect,
we first have to shift all columns $\geq h$ one to the left. To do this,
we need some simple masks. The $k\times k$ bit matrix $M_h$ with column $h$ set
is computed as
\[M_h=(\tto^{k-1}\ttl)^k\LSHIFT h\textnormal,\]
and the $k\times k$ bit matrix $M_{i:j}$ with columns $i,...,j$ set are 
computed as
\[M_{i:j}=M_{j+1}-M_i.\]
This calculation does not work for $M_{0:k-1}$ which instead
is trivially computed as $M_{0:k-1}=\ttl^{k^2}$. For $X=BRANCH,FREE$, we want to shift all columns $\geq h$ one to the left.
This is done by 
\[X=(X\AND M_{0:h-1})\OR((X\AND M_{h:k-1})\LSHIFT 1).\]
To fix the new column $h$, we set 
\begin{align*}
FREE&=FREE\OR M_h\\
BRANCH&=BRANCH\AND\NOT M_h.
\end{align*}
We have now made sure that column $h$ corresponding to position $j=c_h$
is included as a significant position in $BRANCH$ and $FREE$.
Now, for all keys with indices $i_0,...,i_1$, we want to set
the bits in column $h$ to $\lpiece y j 1$. To do this,
we compute the $k\times k$ bit mask $M^{i_0:i_1}$ with rows
$i_0,...,i_1$ set
\[M^{i_0:i_1}=(1\LSHIFT((i_1+1)\times k))-(1\LSHIFT(i_0\times k))\]
and then we mask out column $h$ within these rows by
\[M^{i_0:i_1}_h=M^{i_0:i_1}\AND M_h.\]
We now set 
\begin{align*}
FREE&=FREE\AND\NOT M^{i_0:i_1}_h\\
BRANCH&=BRANCH\OR (M^{i_0:i_1}_h\times \lpiece y j 1)
\end{align*}
We now need to insert the row corresponding to $\hat x^{\ttq}$ with 
$r=\RANK(x)$. 
Making room for row $r$
in $X=BRANCH,FREE$, is done by
\[X=(X\AND M^{0:r-1})\OR((X\AND M^{r:k-1})\LSHIFT k).\]
We also need to create the new row corresponding to $\hat x^{\ttq}$.
Let $\hat y^{\ttq}$ be the compressed key with don't cares that
$x$ matched. Then
\begin{align*}
\lpieceq {\hat x^{\ttq}} {0..h-1}&= \ttq^h\\
\lpieceq {\hat x^{\ttq}} {h} &= \lpiece x {j} 1\\
\lpieceq {\hat x^{\ttq}} {h+1..k-1} &= \lpieceq {\hat y^{\ttq}} {h+1..k-1} 
\end{align*}
With $i$ the index of $y$ after the above insertion of row $r$ for $x$,
we thus set
\begin{align*}
\lpiece {BRANCH} {r,0..h-1} {k\times 1}&= \tto^h\\
\lpiece {FREE} {r,0..h-1} {k\times 1}&= \ttl^h\\
\lpiece {BRANCH} {r,h} {k\times 1}&= \lpiece x {j} 1\\
\lpiece {FREE} {r,h} {k\times 1}&= \tto\\
\lpiece {BRANCH} {r,h+1..k-1} {k\times 1}&= \lpiece {BRANCH} {i,h+1..k-1} {k\times 1}\\
\lpiece {FREE} {r,h+1..k-1} {k\times 1}&= \lpiece {FREE} {i,h+1..k-1} {k\times 1}
\end{align*}
This completes the update of $BRANCH$ and $FREE$. To delete a key, we
just have to invert the above process.
We still have to explain our new perfect compression.

\subsection{Perfect dynamic key compression}\label{sec:key-compress}
As Fredman and Willard \cite{fredman93fusion}, we have a set
of at most $k$ significant bit positions $c_0,...,c_{\ell-1}\in [w]$,
$c_0<c_1<\cdots < c_{\ell-1}$. For a given key $x$, we want
to construct, in constant time, an $\ell$-bit compressed key $\hat x$ such that
for $h\in[\ell]$, $\lpiece {\hat x} h 1=\lpiece x {c_h} 1$. We will also
support that significant positions can be added or removed in constant
time, so as to be added or excluded in future compressions.

Fredman and Willard had a more loose compression, allowing the
compressed key to have some irrelevant zeros between the significant bits,
and the compressed keys to have length $O(k^3)$. 
Here, we will only keep the significant bits. 

Our map will involve three multiplications. The first multiplication
pack the significant bits in a segment of length $O(w/k^2)$. The last two
multiplications will reorder them and place them consecutively.

\subsection{Packing}
We pack the significant bits using the approach of Tarjan and Yao 
for storing a sparse table \cite{tarjan79sparse}. We have a $w$-bit word $B$ where the significant bits are 
set. For a given key $x$, we will operate on $x\AND B$.

For some parameter $b$, we divide words into blocks of
$b$ bits. For our construction below, we can pick any $b$ such that
$k^2\leq b\leq w/k^2$. We are going to pack the significant bits into a
segment of $2b$ bits, so a smaller $b$ would seem to be an
advantage. However, if we later want a uniform algorithm that works
when $k$ is not a constant, then it is important that the number of
blocks is $k^{O(1)}$. We therefore pick the maximal $b=w/k^2$.

We are looking for $w/b$ shift values $s_i<k^2/2$ such that
no two shifted blocks $\lpiece B i {k^2}\LSHIFT s_i$ have a set bit in 
the same position. The packed key will be the $2b$ bit segment
\[\mu(x)=\sum_{i=0}^{w/b}(\lpiece {(x\AND B)} i b\LSHIFT s_i).\]
Since we are never adding set bits in the same position, the  set
bits will appear directly in the sum. The shifts $s_i$ are found
greedily for $i=0,..,w/b-1$. We want to pick $s_i$ such that
$\lpiece B i b\LSHIFT s_i$ does not intersect with 
$\sum_{j=0}^{i-1}(\lpiece B j b\LSHIFT s_j)$. A set bit $p$ in $\lpiece B i b$
collides with set bit $q$ in $\sum_{j=0}^{i-1}(\lpiece B j b\LSHIFT s_j)$
only if $s_i=p-q$. All together,
we have at most $k$ set bits, so there at most $k^2/4$ shift values $s_i$ 
that we cannot use. Hence there is always a good shift value $s_i<k^2/2$.

So compute the above sum in constant time, we will maintain a word $BSHIFT$
that specifies the shifts of all the blocks. For $i=0,...,w/b-1$, we
will have 
\[\lpiece {BSHIFT} {w/b-1-i} b=1\LSHIFT s_i.\]
Essentially we wish to compute the packing via the product $(x\AND
B)\times BSHIFT$, but then we get interference between blocks. To
avoid that we use a mask $E_b=(\tto^b\ttl^b)^{w/(2b)}$ to pick out
every other block, and then compute:
\begin{align*}
\mu(x)=&\left((x\AND B\AND E_b)\times (BSHIFT\AND E_b))\right.\\
&\left.+(((x\AND B)\RSHIFT b)\AND E_b)\times ((BSHIFT\RSHIFT b)\AND E_b)\right)\RSHIFT (w-2b).
\end{align*}
\paragraph{Resetting a block}
If new significant bit $j$ is included in $B$, we may have to change the
shift $s_i$ of the block $i$ containing bit $j$. With $b$ a power of two,
$i=j\RSHIFT \lg b$. 

The update is quite simple. First we remove block $i$ from the
packing by setting 
\[\lpiece {BSHIFT} {w/b-1-i} b=0.\]
Now $\mu(B)$ describes the packing of all bits outside block $i$.
To find a new shift value $s_i$, we use a constant
$S_0$ with 
$bk^2$ bits, where for $s=0,..,k^2/2-1$, 
\[\lpiece {S_0} s {2b}=1\LSHIFT s.\]
Computing the product $S_0\times \lpiece B i b$, we get
that 
\[\lpiece {(S_0\times \lpiece B i b)} s {2b}=\lpiece B i b \LSHIFT s\]
We can use $s$ as a shift if and only if
\[\mu(B)\AND\lpiece {(S_0\times \lpiece B i b)} s {2b}=0.\]
This value is always bounded by $(1\LSHIFT 3b/2)$,
so 
\begin{align*}
&\mu(B)\AND\lpiece {(S_0\times \lpiece B i b)} s {2b}=0\\
&\iff (1\LSHIFT 3b/2)\AND ((1\LSHIFT 3b/2)-(\mu(B)\AND\lpiece {(S_0\times \lpiece B i b)} s {2b}))\neq 0.
\end{align*}
We can thus compute the smallest good shift value as
\[s_i=\LSB((1\LSHIFT 3b/2)^{k^2/2}\AND ((1\LSHIFT 3b/2)^{k^2/2}-
(\mu(B)^{k^2/2}\AND (S_0\times \lpiece B i b)))\RSHIFT \lg (2b).\]
All that remains is to set 
\[\lpiece {BSHIFT} i b=1\LSHIFT s_i.\]
Then $\mu(x)$ gives the desired updated packing.

\subsection{Final reordering}
We now have all significant bits packed in $\mu(x)$ which has $2b\leq
2w/k^2$ bits, with significant bit $h$ in some position $\mu_h$. 
We want to compute the compressed key $\hat x$ with all significant bits in place, that is, $\lpiece {\hat x} h 1 = \lpiece {\hat x} {c_h} 1$.

The first step is to create a word $LSHIFT$ with $k$ fields of $4b$ bit fields
where $\lpiece {LSHIFT} h {4b}=1\LSHIFT (2b-\mu_h)$. 
Then $(LSHIFT\times\mu(x))\RSHIFT 2b$ is a word where the $h$th field has the
$h$th significant bit in its first position. We mask out
all other bits using $((LSHIFT\times\mu(x)) \RSHIFT 2b)\AND (\tto^{4b-1}\ttl)^{k}$.

Finally, to collect the significant bits, we multiply with the $k\times
4b$ bit constant  $S_1$ where for $h\in[k]$, $\lpiece {S_1} {k-1-h}
{4b}=\ttl\LSHIFT h$. Now the compressed key can be computed as
\[\hat x=\left(S_1\times \left(((LSHIFT\times\mu(x)) \RSHIFT 2b)\AND (\tto^{4b-1}\ttl)^{k}\right)\right)\RSHIFT((k-1)\times 4b).\]

\paragraph{Updates}
Finally we need to describe how to maintain $LSHIFT$ in constant time per update.
To this end, we will also maintain a $k\times \lg w$ bit word $C$ with the indices of the 
significant bits, that is, $\lpiece C h {\lg w}=c_h$.

We can view updates as partitioned in two parts: one is to change the shift of
block $i$, and the other is to insert or remove a significant bit. First
we consider the changed shift of block $i$ from $s_i$ to $s_i'$.  If block $i$ has no significant
bits, that is, if $\lpiece B i b=0$, there is nothing to be done.
Otherwise, the index of the first
significant bit in block $i$ is computed as $h_0=\RANK(i\times b,C)$, and the
index of the last one is computed as $h_1=\RANK((i+1)\times b,C)-1$.

In the packing, for $h=h_0,..,h_1$, significant bit $h$ will move
to $\mu_h'=\mu_h+s_i'-s_i$. The numbers $\mu_h$
are not stored explicitly, but only as shifts in the sense
that $\lpiece {LSHIFT} h {4b}=\ttl\LSHIFT (2b-\mu_i)$. The
change in shift is the same for all $h\in [h_0,h_1]$, so
all we have to do is to set
\[\lpiece {LSHIFT} {h_0..h_1} {4b}=\lpiece {LSHIFT} {h_0..h_1} {4b}\LSHIFT (s_b-s'_b)\textnormal,\]
interpreting ``$\LSHIFT (s_b-s'_b)$'' as ``$\RSHIFT (s'_b-s_b)$'' if $s_b<s_b'$.

The other operation we have to handle is when we get a new significant bit $h$
is introduced at position $c_h\in[w]$, that is, $\lpiece B {c_h} 1=\ttl$. 
First we have to insert $c_h$ in $C$,
setting 
\[\lpiece C {h+1..*} {\lg w}=\lpiece C {h..*} {\lg w}\textnormal{ 
and }\lpiece C {h} {\lg w}=c_h.\] 
Significant bit $h$ resides in block $i=c_h\RSHIFT \lg b$, and the
packing will place it in position $\mu_i=c_h-i\times b+s_i$.
Thus for the final update of $LSHIFT$, we set
\[\lpiece {LSHIFT} {h+1..*} {4b}=
\lpiece {LSHIFT} {h..*} {4b}\textnormal{ 
and }\lpiece {LSHIFT}  {h} {4b}=1\LSHIFT (2b-\mu_i).\] 
Removal of significant bit $h$ is accomplished by the simple statements:
\[\lpiece C {h..*} {\lg w}=\lpiece C {h+1..*} {\lg w}\textnormal{ 
and }\lpiece {LSHIFT} {h..*} {4b}=
\lpiece {LSHIFT} {h+1..*} {4b}.\]
This completes our description of perfect compression.

\section{Dynamic fusion trees with operation time $O(\log n/\log w)$}\label{sec:set-n}
We will now show how to maintain a dynamic set $S$ of $n$ $w$-bit integers,
supporting all our operations in $O(\log n/\log w)$ time. Our
fusion nodes of size $k$ is used in a fusion tree which is a search tree
of degree $\Theta(k)$. Because our fusion nodes are
dynamic, we avoid the dynamic binary search tree at the bottom of the
original dynamic fusion trees \cite{fredman93fusion}. Our fusion trees are
augmented to handle rank and select using the approach of 
Dietz \cite{dietz89sums} for maintaining
order in a linked list. Dietz, however,  was only dealing with a list and
no keys, and he used tabulation that could only handle nodes of
size $O(\log n)$. To exploit a potentially larger word length $w$, we will
employ a simplifying trick of P{\v a}tra{\c s}cu and Demaine \cite[\S 8]{patrascu04connect} for partial sums over small weights.
For simplicity, we are going to represent the set $S\cup\{-\infty\}$ where
$-\infty$ is a key smaller than any real key.

Our dynamic fusion node supports all operations in constant time
for sets of size up to $k=w^{1/4}$. We will create a fusion search tree
with degrees between $k/16$ and $k$. The keys are all in the leaves, and
a node on height $i$ will have between $(k/4)^i/4$ and $(k/4)^i$ descending 
leaves. It standard
that constant time for updates to a node imply that we
can maintain the balance for a tree in time proportional to the 
height when leaves are inserted or deleted. When siblings are merged or split, we build new node structures in
the background, adding children one at the time (see, e.g., \cite[\S
  3]{andersson07exptrees} for details).
The total height is 
$O(\log n/\log w)$, which will bound our query and update time.

First, assume that we only want to support predecessor search. For
each child $v$ of a node $u$, we want to know the smallest key descending
from $v$. The set $S_u$ of these smallest keys is maintained
by our fusion node. This includes $KEY_u$ and $INDEX_h$ as described 
in Section \ref{sec:prelim}. The $i$th
key is found as $KEY_u[\lpiece {INDEX_u} i \clk]$. We will have another
array $CHILD_u$ that gives us pointers to the children of $u$, where $CHILD_u$ uses
the same indexing as $KEY_u$. The $i$th child is thus pointed to
by $CHILD_u[\lpiece {INDEX_u} i \clk]$. 

When we come to a fusion node $u$ with a key $x$, we get in constant time
the index $i=\RANK_u(x)$ of the child $v$ such that $x$ belongs under, that is,
the largest $i$ such that $KEY[\lpiece {INDEX} i \clk]< x$.
Following $CHILD_u[\lpiece {INDEX} i \clk]$ down, we eventually get
to the leaf with the predecessor of $x$ in $S$.

Now, if we also want to know the rank of $x$, then for each node
$u$, and child index $i$ of $u$, we want to know the number $N_u[i]$ of keys
descending from the children indexed $<i$. Adding up these
numbers from all nodes on the search path gives us the rank of $x$ (as a technical detail, the count will also count $-\infty$ for one unless the search ends
$-\infty$).

The issue is how we can maintain the numbers $N[i]$; for when we add a
key below the child indexed $i$, we want to add one to $N[j]$ for all
of $j>i$. As in \cite[\S 8]{patrascu04connect} we will maintain
$N_u[i]$ as the sum of two numbers.  For each node $u$, we will have a
regular array $M_u$ of $k$ numbers, each one word, plus an array $D_u$
of $k$ numbers between $-k$ and $k$, each using $\clk+1$ bits. We will
always have $N_u[i]=M_u[i]+\lpiece {D_u} i {\clk+1}$. As described in
\cite[\S 8]{patrascu04connect}, it easy in constant time to add 1 to
$\lpiece {D_u} j {\clk+1}$ for all $j>i$, simply by setting
\[\lpiece {D_u} {i+1..*} {\clk+1}=\lpiece {D_u} {i+1..*} {\clk+1} +(\tto^{\clk}\ttl)^{k-i}.\]
Finally, consider the issue of selecting the key with rank $r$. This
is done recursively, staring from the root. When
we are at a node $u$, we look for the largest $i$ such that $N_u[i]
< r$.
Then we set $r=r-N_u[i]$, and continue the search from child $i$ of $u$. 
To find this child index $i$, we adopt a nice trick of Dietz
\cite{dietz89sums} which works when $u$ is of height $h\geq 2$. We use
$M_u$ as a good approximation for $N_u$ which does not change too often.
We maintain an array $Q_u$ that for each $j\in [k]$ stores the largest index
$i'=Q[j]$ such that $M_u[i']< j k^{h-1}/4^h$. Then the right index $i$ is at most
$5$ different from $i'=Q_u[r 4^h/k^{h-1}]$, so we can check the $10$ relevant indices
$i$ exhaustively. Maintaining $Q_u$ is easy since we in round-robin fashion only
change one $M_u[j]$ at the time, affecting at most a constant number
of $Q_u[j]$.
For nodes of height $1$, we do not need to do anything; for the children
are the leaf keys, so select $r$ just returns the child of index $r$.

This completes the description of our dynamic fusion tree supporting rank, 
select, and predecessor search, plus updates, all in time $O(\log n/\log w)$.

\section{Lower bounds}\label{sec:lower}
First we note that dynamic cell-probe lower bounds hold regardless of the available
space. The simple point is that if we start with an empty set and insert $n$ elements in $O(\log n)$ amortized time per element, then, using perfect
hashing, we could make an $O(n\log n)$ space data structure, that
in constant time could tell the contents of all cells written based on
the integers inserted. Any preprocessing done before the integers were inserted
would be a universal constant, which is ``known'' to the cell probe
querier. The cost of the next dynamic query can thus
be matched by a static data structure using $O(n \log n)$ space.

We claimed that Fredman and Saks \cite{fredman89cellprobe}
provides an $\Omega(\log n/\log w)$ lower bound for dynamic rank and select.
Both follow by reduction from dynamic prefix parity where
we have an array $B$ with $n$ bits, starting, say, with all zeros. An update
flips bit $i$, and query asks what is the xor of the first $j$ bits. By
\cite[Theorem 3]{fredman89cellprobe}, for $\log n\leq w$, the 
amortized operation time is $\Omega(\log n/\log w)$.

\paragraph{Rank} Fredman and Saks mention that a lower bound for rank follows by an
easy reduction: simply consider a dynamic set $S\subseteq [n]$,
where $i\in S$ iff $B[i]=1$. Thus we add $i$ to $S$ when we set
$B[i]$, and delete $i$ from $S$ when we unset $B[i]$. The prefix
parity of $B[0,..,j]$ is the parity of the rank of $j$ in $S$.

\paragraph{Select} Our reduction to selection is bit more convoluted. We assume
that $n$ is divisible by 2. For dynamic select, we
consider a dynamic set $S$ of integers from $[n(n+1)]$. which
we initialize as the set $S_0$ consisting of the integers
\[S_0=\{i(n+1)+j+1\mid i,j\in [n]\}\]
We let bit $B[i]$ in prefix parity correspond to 
integer $i(n+1)\in S$. When we want to ask for prefix parity
of $B[0,..,j]$, we set $y=$select$((j+1)n-1)$.
If there are $r$ set bits in $B[0,..,j]$, then $y=(j+1)(n+1)-1-r$, so 
we can compute $r$ from $y$.

To understand that the above is a legal reduction, note that
putting in $S_0$ which is a fixed constant does not help. While it is
true that the data structure for dynamic select can do work when $S_0$
is being inserted, this is worth nothing in the cell probe model, for
$S_0$ is a fixed constant, so in the cell-probe, if this was useful,
it could just be simulated by the data structure for prefix parity, at
no cost.

\paragraph{Predecessor}
The query time for dynamic predecessor with logarithmic update time
cannot be better than that of static predecessor using polynomial space,
for which Beame and Fich  \cite[Theorem 3.6 plus reductions in Sections 3.1-2]{beame02pred} have proved
a lower bond of $\Omega(\log n/\log w)$  if
$\log n/\log w=O(\log (w/\log n)/\log\log (w/\log n))$. But
the condition is equivalent to $\log n/\log w=O(\log w/\log\log w)$.
To see the equivalence, note that if $w\geq\log^2 n$ then
$\log (w/\log n)=\Theta(\log w)$ and  if $w\leq\log^2 n$, both forms of the
conditions are satisfied. Sen and Vankatesh \cite{sen08roundelim} have proved that the lower bound holds even if we allow randomization.

\section{Optimal randomized dynamic predecessor}\label{sec:ran-pred}
From \cite{patrascu06pred,patrascu07randpred} we know that with key
length $\ell$ and word length $w$, the optimal (randomized) query time for static predecessor search using $\tilde O(n)$ space for $n$ keys is within a constant factor of
\begin{align*}
\max\{1,\min \left\{ \begin{array}{l}
  \frac{\log n}{\log w} \\[1.5ex]
  \frac{\mbox{$\log \frac{\ell}{\log w}$}}{\log \left( \log \frac{\ell}{\log w} 
    \textrm{\large~/} \log\frac{\log n}{\log w} \right)}\\[3.5ex]
  \log \frac{\ell-\lg n}{\log w}
\end{array} \right.
\end{align*}
This is then also the best bound we can hope for dynamically with
logarithmic update times, for we can use the dynamic data structure to
build a static data structure, simply by inserting the $n$ keys.
However, in the introduction, we actually actually replaced the last
branch $\log \frac{\ell-\lg n}{\log w}$ with the higher bound $\log
\frac{\log (2^\ell-n)}{\log w}$. To get this improvement from the static lower
bound, we do as follows.  First note that we can assume that
$\log n\geq \ell/2$, for otherwise the bounds are the same. We
now set $n'=\sqrt{2^\ell-n}<n$ and
$\ell'=\floor{\lg (2^\ell-n)}$.  For a set $S'\subseteq[2^{\ell'}]$ of
size $n'$ using $\tilde O(n')$ space, the static lower bound states
that the query time is $\Omega(\log \frac{\ell'-\lg n'}{\log
  w})=\Omega(\log \frac{\log (2^\ell-n)}{\log w})$. To create such a
data structure dynamically, first, in a preliminary step, for
$i=1,...,n-n'$, we insert the key $2^\ell-i$. We have not yet inserted
any keys from $[2^{\ell'}]$. Now we insert the keys from $S'$. It is
only the $\tilde O(|S'|)$ cells written after we start inserting $S'$
that carry any information about $S'$. Everything written before is
viewed as a universal constant not counted in the cell-probe model.
The newly written cells and their addresses,
form a near-linear space representation of $S'$, to which the
static cell-probe lower bound of $\Omega(\log \frac{\ell'-\lg n'}{\log
  w})=\Omega(\log \frac{\log (2^\ell-n)}{\log w})$ applies.

The main result of this paper is that the top $O(\log n/\log w)$
branch is possible. Essentially this was the missing dynamic bound for
predecessor search if we allow randomization. 

Going carefully through
Section 5 of the full version of \cite{patrascu06pred}, we see that
the lower two branches are fairly easy to implement dynamically if we
use randomized hash tables with constant expected update and query
time (or alternatively, with constant query time and constant
amortized expected update time \cite{dietzfel94dph}). The last branch
is van Emde Boas' data structure \cite{vEB77pred,Mehlhorn90boas}
with some simple tuning. This includes the space saving Y-fast tries
of Willard \cite{willard83pred}, but using our dynamic fusion nodes in
the bottom trees so that they are searched and updated in constant
time. This immediately gives a bound of $O(\log \frac{\ell}{\log
  w})$. The improvement to $O(\log \frac{\log (2^\ell-n)}{\log w})$ is
obtained by observing that for each key stored, both neighbors are at
distance at most $\Delta=2^\ell-n$.
This, in turns, implies that the van Emde
data structure has a recursive call at the prefix of length
$\ell-\ell'$ where $\ell'=2^{\lceil{\lg\lg \Delta\rceil}}$. All such
prefixes will be stored, and for a query key we can look up the
$\ell-\ell'$ bit prefix and start the recursion from there, and then
we complete the search in $O(\log \frac{\ell'}{\log w})=O(\log
\frac{\log (2^\ell-n)}{\log w})$ time. 

To implement the middle part, we follow the small space static
construction from \cite[Full version, Section 5.5.2]{patrascu06pred}.
For certain parameters $q$ and $h$, \cite{patrascu06pred} uses a special data structure 
\cite[Full version, Lemma 20]{patrascu06pred} (which is based on \cite{beame02pred}) using $O(q^{2h})$ space, which is also the construction time. 
They use $q=n^{1/(4h)}$ for
a $O(\sqrt n)$ space and construction time. 
The special data structure only has to
change when a subproblem of size $\Theta(n/q)=\Theta(n^{1-1/(4h)})$
changes in size by a constant factor, which means that
the $O(\sqrt n)$ construction time gets amortized
as sub-constant time per key update. This is similar to
the calculation behind exponential search trees \cite{andersson07exptrees}
and we can use the same de-amortization. Other than the special table,
each node needs a regular hash table like van Emde Boas' data structure, as discussed above.

For a concrete parameter choice, we apply our dynamic
fusion node at the bottom to get $O(w)$ free space per key. Then,
as in \cite{patrascu06pred},
\[h=\frac{\lg \frac {\ell}a}
{\lg\frac{\lg n}{a}}\left/\lg \frac{\lg \frac {\ell}a}
{\lg\frac{\lg n}{a}}\right.
\textnormal{ where }a=\lg w.\]
This yields the desired recursion depth of 
\[O\left(\frac{\lg \frac {\ell}{\lg w}}{\lg h}+h\lg\frac{\lg n}{\lg w}\right)=
O\left(\frac{\mbox{$\log \frac{\ell}{\log w}$}}{\log \left( \log \frac{\ell}{\log w} 
    \textrm{\large~/} \log\frac{\log n}{\log w} \right)}\right).
\]

\paragraph{Acknowledgment} Would like to thank
Djamal Belazzougui for pointing out a bug in one of the bounds stated for 
predecessor search in an earlier version of this paper.

\newcommand{\etalchar}[1]{$^{#1}$}

\end{document}